\renewcommand\section{\@startsection {section}{1}{\z@}%
                                   {-3.5ex \@plus -1ex \@minus -.2ex}%
                                   {2.3ex \@plus.2ex}%
                                   {\normalfont\large\bfseries}}
\theoremstyle{change}
\newtheorem{lemma}{Lemma.}[section]
\newtheorem{prop}[lemma]{Proposition.}
\newtheorem{taller}[lemma]{$\!\!$}
\newenvironment{blanko}[1]%
{\begin{taller}{\normalfont\bfseries #1}\normalfont}%
{\end{taller}}
\providecommand{\qed}{\hspace*{\fill}$\Box$}
\newenvironment{proof}{%
\begin{list}{\em Proof. }%
{\setlength{\labelsep}{0mm}\setlength{\leftmargin}{0mm}%
\setlength{\labelwidth}{0mm}\setlength{\listparindent}{\parindent}%
\setlength{\parsep}{\parskip}\setlength{\partopsep}{0mm}}%
\item}{\qed\end{list}}
\newenvironment{proof*}[1]{%
\begin{list}{\em #1 }%
{\setlength{\labelsep}{0mm}\setlength{\leftmargin}{0mm}%
\setlength{\labelwidth}{0mm}\setlength{\listparindent}{\parindent}%
\setlength{\parsep}{\parskip}\setlength{\partopsep}{0mm}}%
\item}{\qed\end{list}}
\newcommand{\CC}{\mathscr{C}}
\newcommand{\C}{\mathbb{C}}
\newcommand{\tensor}{\otimes}
\newcommand{\ground}{\Bbbk}
\newcommand{\id}{\operatorname{id}}
\newcommand{\ind}{\operatorname{in}}
\newcommand{\out}{\operatorname{out}}
\newcommand{\res}{\operatorname{res}}
\newcommand{\Ker}{\operatorname{Ker}}
\renewcommand{\Im}{\operatorname{Im}}
\newcommand{\Lin}{\operatorname{Lin}}
\renewcommand{\epsilon}{\varepsilon}
\newcommand{\smalldot}{
  \bsegment
    \move (0 0) \fcir f:0 r:1.5
  \esegment
}
\newcommand{\trevert}{
\bsegment
\move (3 0) \lvec (6 0) \lvec (9 3) \move (6 0) \lvec (9 -3)
\esegment
}
\newcommand{\trekant}{
\bsegment \setunitscale{0.8}
\move (2 0) \lvec (6 0) \lvec (13 5)
\move (6 0) \lvec (13 -5)
\move (10 2.7) \lvec (10 -2.7) 
% \move (16 7) \lvec (16 -7) 
\esegment
}
\newcommand{\femkant}{
\bsegment \setunitscale{0.8}
\move (2 0) \lvec (6 0) \lvec (20 10)
\move (6 0) \lvec (20 -10)
\move (10 2.7) \lvec (10 -2.7) 
\move (16 7) \lvec (16 -7) 
\esegment
}
\newcommand{\tinydot}{
  \bsegment
    \move (0 0) \fcir f:0 r:1.1
  \esegment
}
\newcommand{\twotree}{
\bsegment 
\move (0 -4) \lvec (0 0) \tinydot
\lvec (-2 5) \tinydot
\lvec (-4 10)
\move (-2 5) \lvec (0 10)
\move (0 0) \lvec (3 10)
\esegment
}
\newcommand{\onetree}{
\bsegment 
\move (0 -4) \lvec (0 0) \tinydot
\lvec (-2 5) 
\move (0 0) \lvec (2 5)
\esegment
}
\newcommand{\zerotree}{
\bsegment 
\move (0 -4) \lvec (0 4)
\esegment
}
\newcommand{\twonodetree}{
\bsegment 
\move (0 -4) \smalldot \lvec (0 4) \smalldot
\esegment
}
\newcommand{\vtree}{
\bsegment 
\move (0 -4) \smalldot \lvec (-3 4) \smalldot
\move (0 -4) \lvec (3 4) \smalldot
\esegment
}
\begin{document}

\title{Perturbative renormalisation for \\[2pt] not-quite-connected bialgebras}

\author{Joachim Kock}

\date{}

\maketitle

\begin{abstract}
  We observe that the Connes--Kreimer Hopf-algebraic approach to perturbative
  renormalisation works not just for Hopf algebras but more generally for
  filtered bialgebras $B$ with the property that $B_0$ is spanned by group-like
  elements (e.g.~pointed bialgebras with the coradical filtration).
  Such bialgebras occur naturally both in Quantum Field Theory,
  where they have some attractive features, and
  elsewhere in Combinatorics, where they cover a comprehensive class of 
  incidence bialgebras.  In particular, the setting allows us to interpret
  M\"obius inversion as an instance of renormalisation.
\end{abstract}

%%%%%%%%%%%%%%%%%%%%%%%%%%%%%%%%%%%%%%%%%%%%%%%%%%
\section{Introduction}
%%%%%%%%%%%%%%%%%%%%%%%%%%%%%%%%%%%%%%%%%%%%%%%%%%

Kreimer~\cite{Kreimer:9707029} made the crucial discovery that the combinatorics
underlying the 
% Bogoliubov--Parasiuk--Hepp--Zimmermann
BPHZ
renormalisation scheme in perturbative quantum field theory can be encoded in a
Hopf algebra, and his seminal joint work with
Connes~\cite{Connes-Kreimer:9808042},~\cite{Connes-Kreimer:9912092} highlighted
the significance of this through deep connections to many areas of mathematics,
constituting a starting point for numerous further developments.

Subsequent work by Ebrahimi-Fard, Guo,
Manchon~\cite{EbrahimiFard-Guo-Manchon:0602004} and others provided a more
algebraic formulation of the Connes--Kreimer approach, expressing it abstractly
in the setting of a connected graded Hopf algebra $H$ and a Rota--Baxter algebra
$A$, as briefly recalled in Section~\ref{sec:Hopf} below.

The present note makes the observation that the same construction works when the
Hopf algebra is replaced by a filtered bialgebra $B$ with the property that $B_0$
is spanned by group-like elements.  Interest in this observation resides in
the fact that there are natural examples in perturbative QFT where the
bialgebra $B$ contains interesting combinatorics of physical relevance,
invisible in the quotient Hopf algebra $H$.  In Combinatorics, the notion covers
incidence bialgebras, and we show that in this setting M\"obius inversion
becomes a special instance of renormalisation.

\bigskip

In Section~\ref{sec:Hopf} we quickly run through the Hopf case, to set up
notation, and to facilitate the generalisation to bialgebras.  This
generalisation comes in two versions.  In Section~\ref{sec:bialgI} we take the
simplest approach, requiring only that $B_0$ is spanned by group-like elements
$x$, but assuming that the Feynman rules $\phi:B\to A$ satisfy $\phi(x)=1$.
However, this assumption on the Feynman rules is strictly speaking not realistic
physically.  In Section~\ref{sec:bialgII} we give a second version, where $B$ is
assumed to be a polynomial algebra generated by elements whose comultiplication
has group-like components in degree $0$.  (See \ref{condII} for the precise
condition.)  This condition, which is essentially satisfied automatically for
bialgebras defined in terms of combinatorial data, ensures the existence of a
residue operator, which allows to calibrate the Feynman rules, now allowed to
take invertible values on group-like elements, so as to reduce to the previous
case, $\tilde\phi(x)=1$.  In Section~\ref{sec:ex}, the relevance of the
bialgebra generalisation is substantiated through interpretation in perturbative
QFT, and with some related examples from Combinatorics.  Finally in
Section~\ref{sec:M}, we establish the renormalisation principle for coalgebras,
show that incidence coalgebras of M\"obius categories (and more generally, of
M\"obius decomposition spaces) constitute examples, and show that M\"obius
inversion is a special case of renormalisation.

%%%%%%%%%%%%%%%%%%%%%%%%%%%%%%%%%%%%%%%%%%%%%%%%%%
\section{Hopf algebra renormalisation}
%%%%%%%%%%%%%%%%%%%%%%%%%%%%%%%%%%%%%%%%%%%%%%%%%%
\label{sec:Hopf}

\begin{blanko}{Connes--Kreimer Hopf-algebraic renormalisation \cite{Connes-Kreimer:9912092}.}
  (Convenient self-contained accounts are given in \cite{Manchon:0408}, 
  generous with mathematical preliminaries, and in
  \cite{EbrahimiFard-Kreimer:0510}, emphasising physical background and 
  perspectives.)  Let $H$ be a connected graded Hopf algebra, let
  $(A,R)$ be a 
%   commutative
  Rota--Baxter algebra, with idempotent Rota--Baxter
  operator $R$ of weight $1$.  Put $A_+ := \Ker(R)$ (a unital subalgebra of $A)$
  and $A_- :=\Im(R)$ (a non-unital subalgebra of $A$); we have $A = A_-
  \oplus A_+$.

  The space $\Lin(H,A)$ of linear maps is a monoid under the convolution
  product, with $e := \eta_A \circ \epsilon$ as neutral element.  Linear maps
  $\phi:H\to A$ with $\phi(1)=1$ are referred to as (regularised) Feynman rules;
  these form a group.  The Feynman rules that are furthermore algebra
  homomorphisms form a subgroup, the group of {\em $A$-valued characters} of $H$.
\end{blanko}

\begin{prop}[\cite{Connes-Kreimer:9912092}]\label{RENH}
  For each Feynman rule $\phi$, denote by $\phi_-$ the linear map defined recursively by
  \begin{equation}\label{phiH}
    \phi_- \ := \ e + R\big(\phi_- * (e\!-\!\phi)\big) .
  \end{equation}
  The {\em renormalised Feynman rule}
  $$
  \phi_+ := \phi_- * \phi
  $$
  maps $H_+$ into $A_+$.
  Furthermore, if $\phi$ is a character, then so are $\phi_-$ and $\phi_+$.
\end{prop}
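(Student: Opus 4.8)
The plan is to prove the two claims in order, obtaining the image statement essentially for free and concentrating the real work on multiplicativity. First I would record that the recursion (\ref{phiH}) is well defined: since $H$ is connected and graded, the reduced coproduct $\bar\Delta(h)=\Delta(h)-h\tensor 1-1\tensor h$ sends $H_n$ into $\bigoplus_{0<i<n}H_i\tensor H_{n-i}$, so evaluating $\phi_-*(e-\phi)$ on $h\in H_n$ involves $\phi_-$ only in degrees $<n$ (the terms $h\tensor 1$ and $1\tensor h$ contribute $\phi_-(h)(e-\phi)(1)=0$ and $\phi_-(1)(e-\phi)(h)$), and one checks $\phi_-(1)=1$. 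It is convenient to write $\mathcal R:=R\circ(-)$ for post-composition with $R$ on $\Lin(H,A)$ and, using $\phi_-*(e-\phi)=\phi_--\phi_+$, to recast (\ref{phiH}) as $\phi_-=e+\mathcal R(\phi_--\phi_+)$.

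For the image statement, set $Y:=\phi_--\phi_+=\phi_-*(e-\phi)$. Then $\phi_-=e+\mathcal R(Y)$ gives $\phi_+=\phi_--Y=e-(\id-\mathcal R)(Y)$. Because $R$ is idempotent, $\id_A-R$ is the projector onto $\Ker R=A_+$; since $\id-\mathcal R$ is post-composition with it, on $H_+$, where $e$ vanishes, $\phi_+(h)=-(\id_A-R)\big(Y(h)\big)\in A_+$, which is exactly the first assertion.

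It remains to handle characters. I would first note that $A$ is commutative --- this is what makes the Feynman characters a group, as in the setup (indeed $\mathcal R$ is then Rota--Baxter of weight $1$ on $(\Lin(H,A),*)$) --- so that the convolution of two characters is again a character. Granting that $\phi_-$ is a character, $\phi_+=\phi_-*\phi$ is then automatically a character as well, and everything reduces to proving that $\phi_-$ is multiplicative.

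This last point I would establish by induction on degree. On $H_+$ the recursion reads $\phi_-(z)=-R\big(\bar\phi(z)\big)$ with $\bar\phi:=\phi_-*(\phi-e)$. For homogeneous $x,y\in H_+$, applying the weight-$1$ identity to $\phi_-(x)\phi_-(y)=R(\bar\phi(x))\,R(\bar\phi(y))$ and using $R(\bar\phi(x))=-\phi_-(x)$ reduces the desired equality $\phi_-(xy)=\phi_-(x)\phi_-(y)$ to the single identity
\[
\bar\phi(xy)=\phi_-(x)\,\bar\phi(y)+\bar\phi(x)\,\phi_-(y)-\bar\phi(x)\,\bar\phi(y).
\]
To verify it I would expand $\bar\phi(xy)=\sum\phi_-\big((xy)_{(1)}\big)(\phi-e)\big((xy)_{(2)}\big)$ via $\Delta(xy)=\Delta(x)\Delta(y)$. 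Every left-hand coproduct factor other than the top term $xy\tensor 1$ (annihilated by $(\phi-e)(1)=0$) has degree strictly below $\deg(xy)$, so the induction hypothesis factors each $\phi_-$ of a product, whereupon the character property of $\phi$ and the commutativity of $A$ rearrange the resulting sum into the right-hand side above. The one genuine obstacle is this bookkeeping of the coproduct of a product; once the terms are matched, the argument closes.
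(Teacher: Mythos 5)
Your overall architecture is the standard one and agrees with what the paper sketches (it defers the details to Manchon): well-definedness of the recursion from the connected-graded form \eqref{eq:primitive} of $\Delta$, the image statement, and multiplicativity of $\phi_-$ by induction on $\deg(x)+\deg(y)$ using the Rota--Baxter relation. Your handling of the first claim is correct and clean: writing $\phi_+=e-(\id-R)\circ Y$ with $Y=\phi_-*(e-\phi)$ and using that $\id_A-R$ is the projector onto $\Ker R=A_+$ for idempotent $R$ gives $\phi_+(H_+)\subset A_+$ immediately. Your reduction of the character claims to multiplicativity of $\phi_-$ via commutativity of $A$ is also fine.

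The problem is that the pivotal identity you reduce to is false: it has the wrong sign on the last term, and its verification --- which you yourself flag as the one genuine obstacle --- will not close as written. The paper's $R$ is an idempotent Rota--Baxter operator of weight $1$ whose model is minimal subtraction, so the relevant identity is $R(a)R(b)=R\big(R(a)b\big)+R\big(aR(b)\big)-R(ab)$ (this is the convention forced by the paper's statement that $\Ker R$ and $\Im R$ are subalgebras with $A=A_-\oplus A_+$; the opposite sign, $R(a)R(b)=R\big(R(a)b+aR(b)+ab\big)$, which is what your computation implicitly uses, is \emph{not} satisfied by the pole-part projector). Substituting $a=\bar\phi(x)$, $b=\bar\phi(y)$ and $R\bar\phi=-\phi_-$ on $H_+$ into the correct relation gives
\begin{equation*}
\phi_-(x)\phi_-(y)\;=\;-R\Big(\phi_-(x)\bar\phi(y)+\bar\phi(x)\phi_-(y)+\bar\phi(x)\bar\phi(y)\Big),
\end{equation*}
so the identity you must verify is $\bar\phi(xy)=\phi_-(x)\bar\phi(y)+\bar\phi(x)\phi_-(y)+\bar\phi(x)\bar\phi(y)$, with a \emph{plus} on the last term. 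You can see that your minus-version fails already for $x,y$ primitive of degree $1$: there $\bar\phi(x)=\phi(x)$, $\bar\phi(y)=\phi(y)$, and expanding $\Delta(xy)$ gives $\bar\phi(xy)=\phi_-(x)\phi(y)+\phi_-(y)\phi(x)+\phi(x)\phi(y)$, which matches the plus-version and differs from your right-hand side by $2\phi(x)\phi(y)$. (As a sanity check, the plus-version is exactly $\bar\phi(xy)=\phi_+(x)\phi_+(y)-\phi_-(x)\phi_-(y)$, consistent with $\bar\phi=\phi_+-\phi_-$ once both factors are known to be multiplicative.) The rest of your bookkeeping --- isolating the single term $xy\tensor 1$ that escapes the induction hypothesis and noting it is killed by $(\phi-e)(1)=0$ --- is correct, so the proof is repaired by fixing the sign and then actually carrying out the matching of terms against the corrected identity.
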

(The equation
$
\phi = \phi_-^{-1} * \phi_+
$
constitutes a Birkhoff decomposition of $\phi$, as has been observed and exploited by 
Connes and Kreimer; see also \cite{Connes-Marcolli}.)

For a proof of the proposition, see Manchon~\cite{Manchon:0408}.    In the proof, the
antipode of $H$ does not play any role (although other formulations may exploit it,
cf.~Kreimer's interpretation of $\phi_-$ as a twisted antipode).  More important
is the fact that in a connected graded bialgebra, the comultiplication takes the
following form (for $x\in H_+$):
\begin{equation}\label{eq:primitive}
  \Delta(x) = 1 \tensor x + \sum_{(x)} x'\tensor x'' + x \tensor 1.
\end{equation}
The middle sum is restricted Sweedler notation for all the terms of the
comultiplication of positive degree strictly less than the degree of $x$.
This splitting makes the recursive definition of $\phi_-$ meaningful: $e$ and
$ \phi$ agree on $H_0$, so this case is the basis of the recursive definition,
$\phi_-(1) = e(1)=1$.  For $x$ of degree $n>0$, we have $e(x)=0$, and hence
\begin{equation}\label{phi-x}
\phi_-(x) = - R \phi(x) -R\big( \sum_{(x)} 
  \phi_-(x')\phi(x'')     \big) - 0 ,
\end{equation}
where the middle sum involves only elements of degree strictly less than $n$,
and hence are determined inductively.
  
The Rota--Baxter axiom is needed only for the character property:
to show that $\phi_-(xy) = \phi_-(x)\phi_-(y)$ one can clearly assume
that both $x$ and $y$ are of positive degree, and then exploit 
\eqref{eq:primitive} in an inductive argument where the Rota--Baxter
property turns out be exactly what is needed.

\begin{blanko}{Example: QFT.}\label{ex:QFT}
  In perturbative quantum field theory \cite{Connes-Kreimer:9912092}
  (see Section~\ref{sec:ex} for further
  details), $H$ is a Hopf algebra of certain 1PI Feynman graphs (excluding
  graphs with no inner lines), $\phi: H\to
  A:=\C[[t,t^{-1}]$ is a dimension-regularised Feynman rule,
%   (involving the finite coupling constant $g$ and the non-physical 
%   dimension-regularisation parameter $t$ (usually denoted $\epsilon$)), 
  and $R$ is taking pole part, the minimal
  subtraction scheme.  Other regularisations and renormalisation schemes fit the
  description too.
\end{blanko}

%%%%%%%%%%%%%%%%%%%%%%%%%%%%%%%%%%%%%%%%%%%%%%%%%%
\section{Bialgebra renormalisation I}
%%%%%%%%%%%%%%%%%%%%%%%%%%%%%%%%%%%%%%%%%%%%%%%%%%

\label{sec:bialgI}

We now weaken the hypotheses.  

\begin{blanko}{Hypothesis I.}
  Instead of a connected graded Hopf algebra, we work with a filtered bialgebra
  $B$ with the property that $B_0$ is spanned by group-like elements.
\end{blanko}

\begin{blanko}{Remark on pointedness.}\footnote{See
  Sweedler~\cite{Sweedler} for the notions of pointedness and the coradical 
  filtration (not needed in what follows).}
  One can show that any bialgebra $B$ satisfying Hypothesis I is in fact pointed.
% (i.e.~all simple sub-coalgebras are $1$-dimensional).
  Conversely, for every pointed bialgebra $B$, for the coradical filtration we
  have that $B_0$ is spanned by group-like elements.  Hence an alternative to
  Hypothesis I is to work with pointed bialgebras.
  The condition in Hypothesis I is preferred over pointedness
  because in some cases of interest,
  the filtration may be different from the coradical
  filtration.
\end{blanko}

In many important cases, $B$ will actually be graded.  (The more general
hypothesis will be important to cover many examples from Combinatorics,
cf.~\ref{inc}--\ref{3=1+1} below.)  Even if $B$ is only filtered, some of the
main arguments exploit degree:

\begin{blanko}{Auxiliary notion of degree.}
  Denote the filtration with subscripts:
  $$B_0 \subset B_1 \subset B_2 \subset \cdots \subset B.$$
  Put $B(0):= B_0$, and for each $i\geq 0$, 
  choose a linear complement $B(i+1)$ of $B_i \subset B_{i+1}$
  so as to write $B_{i+1} = B_i 
  \oplus B(i+1)$.  Altogether $B_n = \oplus_{i=0}^n B(i)$  and
  $$
  B = \oplus_{n=0}^\infty B(n),
  $$
  defining a notion of degree.
  We shall assume that each $B(i+1)$ is chosen inside $\Ker \epsilon$;
  this is possible since $1\oplus \Ker \epsilon = B$  and  $1\in B_0$.
  We put $B_+ := \oplus_{n=1}^\infty B(n) \subset \Ker\epsilon$.

  With this auxiliary notion of degree, we can write the comultiplication of a
  homogeneous degree-$n$ element $x$ according to degree splitting.
  Denote by $\Delta_{p,q}(x)$ the projection of $\Delta(x)$ onto $B(p)\tensor 
  B(q)$, then we can write
  $$
  \Delta(x) = \sum_{p+q\leq n} \Delta_{p,q}(x) .
  $$
  Note that this may involve terms of lower degree than expected, hence the
  summation over $p+q\leq n$, in contrast to the graded case where 
  only terms with $p+q=n$ contribute.
\end{blanko}

\begin{lemma}
  Assuming Hypothesis I, and with respect to the auxiliary degree,
  for $\deg(x) = n>0$ we have
  \begin{equation}\label{eq:split}
  \Delta(x) \ = \ \Delta_{0,n}(x) \ + \ \sum_{(x)} x'\tensor x'' \ + \ 
  \Delta_{n,0}(x) .
  \end{equation}
  The middle part (indicated 
  with restricted Sweedler notation as in \eqref{eq:primitive}) involves
  only $x'$ and $x''$ of positive degree
  strictly less than $n$.
\end{lemma}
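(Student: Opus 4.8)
The plan is to regroup the sum $\Delta(x) = \sum_{p+q\leq n}\Delta_{p,q}(x)$ according to whether $p=0$, both $p,q\geq 1$, or $q=0$, and to show that the first group collapses to the single term $\Delta_{0,n}(x)$ and the third to $\Delta_{n,0}(x)$. The middle group is the easy part: if $p\geq 1$, $q\geq 1$ and $p+q\leq n$, then automatically $1\leq p\leq n-1$ and $1\leq q\leq n-1$, so every $x'\tensor x''$ occurring there has both tensor factors of positive degree strictly less than $n$. This is exactly the last sentence of the statement and needs nothing beyond the range of summation.

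The two ends are governed by the counit. By the choice of complements, $\epsilon$ vanishes on every $B(i)$ with $i\geq 1$ and restricts to the augmentation on $B(0)=B_0$, where each group-like $g$ satisfies $\epsilon(g)=1$. Applying $\epsilon\tensor\id$ to $\Delta(x)$ therefore annihilates all terms with $p\geq 1$ and leaves only the $p=0$ column; writing that column as $\sum_\alpha g_\alpha\tensor x_\alpha$ in a basis $\{g_\alpha\}$ of group-likes for $B_0$, the counit axiom $(\epsilon\tensor\id)\Delta(x)=x$ gives $\sum_\alpha x_\alpha = x\in B(n)$, and symmetrically $(\id\tensor\epsilon)\Delta(x)=x$ controls the $q=0$ row. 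This already shows that, after contracting with $\epsilon$, the $p=0$ and $q=0$ contributions carry exactly the degree-$n$ element $x$, just as $1\tensor x$ and $x\tensor 1$ do in \eqref{eq:primitive}.

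The crux is to upgrade this ``after contraction'' statement to the genuine vanishing $\Delta_{0,q}(x)=0$ for $q<n$ (and symmetrically $\Delta_{p,0}(x)=0$ for $p<n$): one must show that each $x_\alpha$ is homogeneous of degree $n$, not merely that the $x_\alpha$ sum to a homogeneous element. Here I would exploit that the $g_\alpha$ are group-like, so $\Delta(g_\alpha)=g_\alpha\tensor g_\alpha$, together with coassociativity. Setting $P_\alpha := (\theta_\alpha\tensor\id)\Delta$, where $\theta_\alpha$ is the functional dual to $g_\alpha$ extended by zero off $B_0$, a direct computation gives the unconditional intertwining $\Delta\circ P_\alpha = (P_\alpha\tensor\id)\circ\Delta$ and $\sum_\alpha P_\alpha = \id$, with $x_\alpha = P_\alpha(x)$. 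The group-like structure is what separates the single counit identity into one identity per index $\alpha$, reducing the problem to understanding a single $g_\alpha$-component at a time.

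The main obstacle I anticipate is precisely this last step, and it is here that the filtered (as opposed to graded) hypothesis bites: $\Delta$ may produce terms of lower degree than expected, so the bare argument does not by itself force $x_\alpha\in B(n)$, and indeed one sees that the $p=0$ column can acquire a spurious lower-order piece living in $B_0\tensor B_0$. I would therefore argue by induction on $n$, the essential point being to prove $\Delta_{0,0}(y)=0$ for $y$ of positive degree; feeding this back through the intertwining relation and the projections $P_\alpha$ kills the lower-order corrections in both the $p=0$ column and the $q=0$ row. (Equivalently, one checks that the complements $B(i)\subset\Ker\epsilon$ can be chosen compatibly with the coproduct so that no such corrections arise.) Once homogeneity $x_\alpha\in B(n)$ is in hand, the $p=0$ column is exactly $\Delta_{0,n}(x)=\sum_\alpha g_\alpha\tensor x_\alpha$, the $q=0$ row is $\Delta_{n,0}(x)$, and combining with the middle group yields \eqref{eq:split}.
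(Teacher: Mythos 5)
Your first two paragraphs reproduce what the paper itself offers as proof (the paper's entire justification is two sentences: the claim ``is a consequence of counitality, together with the assumption that $B_+ \subset \Ker\epsilon$''): the middle terms are controlled by the range of summation, and counitality plus $B_+\subset\Ker\epsilon$ shows that applying $\epsilon\tensor\id$ (resp.\ $\id\tensor\epsilon$) to the $p=0$ column (resp.\ $q=0$ row) returns exactly $x\in B(n)$. Your third and fourth paragraphs then correctly put the finger on the step the paper skips: when $B_0$ contains more than one group-like, counitality only forces the \emph{sum} $\sum_\alpha x_\alpha$ to be homogeneous of degree $n$, not each component $x_\alpha$ separately, so the individual vanishing $\Delta_{0,q}(x)=0$ for $q<n$ does not follow. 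You have located a genuine gap in the paper's own argument.

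However, the completion you propose in your main line --- proving $\Delta_{0,0}(y)=0$ for every homogeneous $y$ of positive degree, by induction --- cannot work, because that statement is false for an arbitrary complement $B(n)\subset\Ker\epsilon$. Take $B=\ground[h,h^{-1}][y]$ with $h$ group-like and $y$ primitive, filtered by degree in $y$, so that $B_0=\ground[h,h^{-1}]$ is spanned by the group-likes $h^k$. The element $x:=y+h-1$ lies in $\Ker\epsilon$ and may legitimately be taken as a basis vector of the complement $B(1)$; then $\Delta(x)=1\tensor x+x\tensor 1+(h-1)\tensor(h-1)$, so $\Delta_{0,0}(x)=(h-1)\tensor(h-1)\neq 0$, while counitality is satisfied because $\epsilon(h-1)=0$. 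Thus the lemma itself fails for this choice of complement, and only your parenthetical alternative is viable: the complements must be \emph{re-chosen} compatibly with $\Delta$ (a Taft--Wilson-type normalisation, replacing $x$ by $x-(h-1)=y$ in the example), which is an additional construction rather than a consequence of counitality. In the intended applications this is harmless --- in the graded case and under Hypothesis II the natural complements already work, and in Proposition~\ref{PropI} only the $\epsilon$-contractions of the offending terms enter the convolution, so they cancel --- but as a proof of the lemma as stated, both the paper's remark and your induction on $\Delta_{0,0}$ fall short; the correct repair is the one you relegated to parentheses, and it deserves to be carried out explicitly.
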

% \begin{proof}
  The non-trivial  statement is that no degree splittings of
  type $0+m$ or $m+0$ occur with $m<n$.  This is a consequence of counitality,
  together with the assumption that $B_+ \subset \Ker \epsilon$.
% \end{proof}
  
\begin{lemma}
  The ideal $I =\langle \;1-x \mid x \text{ \rm group-like}\;\rangle \subset B$
  is also a (filtered) two-sided co-ideal, and the quotient bialgebra $H:=B/I$ is
  connected, hence Hopf.
\end{lemma}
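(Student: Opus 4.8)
The plan is to verify three things about the ideal $I = \langle 1-x \mid x \text{ group-like}\rangle$: that it is a coideal (meaning $\Delta(I) \subseteq I \otimes B + B \otimes I$ and $\epsilon(I) = 0$), that it is compatible with the filtration, and that the resulting quotient bialgebra $H = B/I$ is connected. I would start by recalling the defining property of a group-like element: $x$ is group-like when $\Delta(x) = x \otimes x$ and $\epsilon(x) = 1$. These two facts are exactly what drive every step.

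First I would check the coideal conditions on the generators $1 - x$. For the counit, $\epsilon(1 - x) = \epsilon(1) - \epsilon(x) = 1 - 1 = 0$, so $\epsilon(I) = 0$ is immediate. For the comultiplication, the key computation is
$$
\Delta(1 - x) = 1 \tensor 1 - x \tensor x = (1 - x)\tensor 1 + x \tensor (1 - x),
$$
which visibly lies in $I \tensor B + B \tensor I$. Since $I$ is generated as a two-sided ideal by the $1 - x$ and $\Delta$ is an algebra map, this extends to all of $I$ by the standard multiplicativity argument, so $I$ is a two-sided coideal. The filtration compatibility is easy: each group-like $x$ spans (together with $1$) part of $B_0$, so $1 - x \in B_0$, whence the generators lie in the lowest filtration step and $I$ is a filtered ideal.

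Next I would analyse $H = B/I$. Because $I$ is both an ideal and a coideal, the quotient $H$ inherits a bialgebra structure and the filtration descends to $H$. Connectedness means $H_0 = \ground \cdot 1$. The point is that in $B_0$, which is spanned by group-like elements by Hypothesis~I, every generator $x$ becomes identified with $1$ in the quotient, since $1 - x \in I$. Hence all group-like elements collapse to the single class of $1$, so $H_0$ is one-dimensional, spanned by the image of $1$; this is connectedness. Finally, a connected filtered bialgebra admits an antipode, constructed degree by degree via the usual recursion $S(x) = -x - \sum S(x')x''$ on the restricted Sweedler terms from \eqref{eq:split} — here the splitting lemma guarantees the sum involves only strictly-lower-degree elements, so the recursion is well-founded — making $H$ a Hopf algebra.

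\textbf{The main obstacle} I anticipate is the connectedness step, specifically verifying that the group-like elements genuinely span $B_0$ after passing to the quotient and that no extra dimension survives in $H_0$. One must be careful that different group-like elements $x, y$ are all identified with $1$ (not merely with each other), which follows since $1 - x$ and $1 - y$ both lie in $I$; and one must confirm that $H_0$ is not accidentally zero, i.e.\ that $1 \notin I$ — this is where counitality earns its keep, since $\epsilon(I) = 0$ while $\epsilon(1) = 1$ forbids $1 \in I$. The remaining verifications (that the filtration on $H$ is well-defined and exhaustive, and that the antipode recursion closes up) are routine given the splitting in Lemma~\eqref{eq:split}.
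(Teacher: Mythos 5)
Your proof is correct; the paper states this lemma without any proof, and your argument — checking the coideal conditions on the generators via $\Delta(1-x)=(1-x)\otimes 1 + x\otimes(1-x)$ and $\epsilon(1-x)=0$, extending to all of $I$ by multiplicativity of $\Delta$, collapsing all group-likes to the class of $1$ so that $H_0$ is one-dimensional (with $1\notin I$ forced by $\epsilon$), and invoking the standard antipode recursion for connected filtered bialgebras — is exactly the standard argument one would supply. The only step you take for granted is that every group-like element of $B$ actually lies in $B_0$ (needed for your claim that the generators $1-x$ sit in filtration degree $0$); this does hold, since the degree-zero piece of any coalgebra filtration contains the coradical and the coradical contains all group-like elements, but it merits a sentence.
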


\begin{lemma}
  For $A$ a unital algebra, the linear maps $\phi\in\Lin(B,A)$ such that
  $\phi(x)=1$ for all group-like elements $x$, form a group under convolution.
  The subgroup of multiplicative maps is
  isomorphic to the group of $A$-valued characters of $H$. 
\end{lemma}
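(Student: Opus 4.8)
The plan is to prove the two assertions in turn: first that $G := \{\phi\in\Lin(B,A)\mid \phi(x)=1 \text{ for all group-like } x\}$ is a group under convolution, and then that its multiplicative elements correspond to the $A$-valued characters of $H$. I would begin with the easy structural facts. The neutral element $e=\eta_A\circ\epsilon$ lies in $G$ since $\epsilon(x)=1$ for group-like $x$, and $G$ is closed under $*$ because $\Delta(x)=x\tensor x$ forces $(\phi*\psi)(x)=\phi(x)\psi(x)=1$; associativity is inherited from $(\Lin(B,A),*)$. The observation that drives everything else is that, as the group-like elements span $B_0$, every $\phi\in G$ agrees with $e$ on $B_0$; equivalently $\mu:=e-\phi$ vanishes on $B_0$.

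The real work is the construction of inverses, where no antipode is available and the filtration must play the role that connectedness plays in the Hopf case. Given $\phi\in G$, I would set $\mu:=e-\phi$ and show that $\mu$ is locally nilpotent for convolution: starting from $\mu|_{B_0}=0$ and using the filtered comultiplication $\Delta(B_k)\subseteq\sum_{i+j=k}B_i\tensor B_j$, an induction proves that $\mu^{*k}$ vanishes on $B_{k-1}$. Indeed, in $\mu^{*(k+1)}(x)=(\mu*\mu^{*k})(x)$ for $x\in B_k$, each Sweedler term $\mu(x')\,\mu^{*k}(x'')$ dies: if $x'$ has filtration degree $0$ then $\mu(x')=0$, and otherwise $x''$ has filtration degree at most $k-1$, so $\mu^{*k}(x'')=0$ by the inductive hypothesis. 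This is exactly the mechanism made explicit by the splitting \eqref{eq:split}. Consequently $\psi:=\sum_{k\geq 0}\mu^{*k}$ is well defined, since for each $x$ only finitely many terms are nonzero, and the telescoping identities $\psi*(e-\mu)=(e-\mu)*\psi=e$ exhibit $\psi$ as a two-sided inverse of $\phi=e-\mu$. Finally $\psi\in G$: for group-like $x$ one has $\mu(x)=\epsilon(x)-\phi(x)=0$, hence $\mu^{*k}(x)=\mu(x)^k=0$ for $k\geq 1$ by $\Delta(x)=x\tensor x$, leaving $\psi(x)=e(x)=1$. Thus $G$ is a group.

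For the second assertion I would invoke the quotient $H=B/I$ from the previous lemma, with its projection $\pi:B\to H$, a morphism of bialgebras. If $\phi\in G$ is multiplicative then it annihilates each generator $1-x$ of $I$ and, being an algebra map, annihilates all of $I$; so $\phi$ factors as $\bar\phi\circ\pi$ with $\bar\phi$ an $A$-valued character of $H$. Conversely each character $\chi$ of $H$ pulls back to $\chi\circ\pi$, which is multiplicative and sends every group-like $x$ to $\chi(\pi(x))=\chi(1)=1$, hence lies in $G$. These assignments are mutually inverse, and since $\pi$ is a coalgebra map, pullback intertwines the convolution products, $(\chi_1\circ\pi)*(\chi_2\circ\pi)=(\chi_1*\chi_2)\circ\pi$; the correspondence is therefore an isomorphism of groups onto the subgroup of multiplicative maps.

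I expect the construction of convolution inverses to be the only genuine obstacle: the crux is recognising that $e-\phi$ is locally nilpotent with respect to the filtration, which is precisely what the splitting \eqref{eq:split} guarantees. Given that, the passage to characters of $H$ is a routine application of the universal property of the quotient $H=B/I$.
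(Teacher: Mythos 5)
Your proposal is correct and follows essentially the same route as the paper: the inverse is the geometric series $\sum_{n\geq 0}(e-\phi)^{*n}$, whose pointwise convergence rests on $e$ and $\phi$ agreeing on $B_0$ together with induction on the (filtration) degree, which is exactly the paper's one-line justification; your local-nilpotence lemma just makes that induction explicit. The identification of the multiplicative elements with characters of $H=B/I$ via the universal property of the quotient is the intended (and only natural) argument, which the paper leaves unstated.
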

The convolution inverse of such a $\phi$ is given by the series expansion
$\phi^{-1} = \sum_{n\geq 0} (e - \phi)^{*n}$, which is convergent for every $x$,
by induction on the grading, and because $e$ and $\phi$ agree on $B_0$.

\begin{prop}\label{PropI}
  If $\phi(x)=1$ for all group-like elements $x$, then the definitions of
  $\phi_-$ and $\phi_+$ from Proposition~\ref{RENH} make sense, and the
  conclusions there hold again.
\end{prop}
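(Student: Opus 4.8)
The plan is to mirror the Hopf-case induction on the auxiliary degree, replacing \eqref{eq:primitive} by the splitting \eqref{eq:split}, and to isolate the one place where the group-like hypothesis intervenes. First I would record that every group-like $x$ has $\epsilon(x)=1$, so $e(x)=\eta_A(\epsilon(x))=1=\phi(x)$; thus $e$ and $\phi$ agree on $B_0$. This both anchors the recursion --- evaluating $\phi_-*(e-\phi)$ on a group-like $x$ gives $\phi_-(x)\,(e-\phi)(x)=0$, forcing $\phi_-=e$ on $B_0$ --- and, more importantly, makes $e-\phi$ vanish on all of $B_0$.

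Next I would verify that \eqref{phiH}, unwound as in \eqref{phi-x}, determines $\phi_-(x)$ for $\deg x=n>0$ by induction. Expanding $(\phi_-*(e-\phi))(x)$ through the three pieces of \eqref{eq:split}: the $\Delta_{n,0}(x)$ part contributes terms $\phi_-(z)\,(e-\phi)(h)$ with $h\in B_0$, which vanish because $e-\phi=0$ on $B_0$, so the a priori unknown top-degree values $\phi_-(z)$ never enter; the middle terms involve only $x',x''$ of degree strictly between $0$ and $n$ and are supplied by the inductive hypothesis; and the $\Delta_{0,n}(x)$ part pairs the already-known values of $\phi_-$ on $B_0$ with $-\phi$ on degree-$n$ elements. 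Hence the right-hand side of \eqref{phi-x} is well-defined, and $\phi_-$, and with it $\phi_+=\phi_-*\phi$, exist and are unique.

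The claim $\phi_+(B_+)\subseteq A_+$ is then formal and identical to the Hopf argument: using $\phi_-*(e-\phi)=\phi_--\phi_+$ and $e=0$ on $B_+$ (as $B_+\subset\Ker\epsilon$), the recursion reads $\phi_-=R(\phi_--\phi_+)$ on $B_+$, so $\phi_-$ takes values in $\Im R$ there; idempotency of $R$ gives $R\phi_-=\phi_-$, and substituting back yields $R\phi_+=0$, i.e.\ $\phi_+(B_+)\subseteq\Ker R=A_+$. For the character statement I would avoid redoing the Rota--Baxter computation and instead reduce to Proposition~\ref{RENH}: a multiplicative $\phi$ with $\phi(x)=1$ on group-likes vanishes on the ideal $I$ and so factors as $\phi=\bar\phi\circ\pi$ through the connected Hopf quotient $\pi\colon B\to H=B/I$, with $\bar\phi$ a character of $H$. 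Proposition~\ref{RENH} produces characters $\bar\phi_-,\bar\phi_+$ of $H$; since $\pi$ intertwines comultiplications and counits (and $e_B=e_H\circ\pi$), the pullbacks $\bar\phi_-\circ\pi$ and $\bar\phi_+\circ\pi$ satisfy the $B$-recursion, hence equal $\phi_-$ and $\phi_+$ by the uniqueness above, and are characters as composites of algebra maps.

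The main obstacle is the well-foundedness argued in the second paragraph. Over the connected graded case the one genuinely new phenomenon is that $B_0$ is larger than $\ground$, so I must check simultaneously that $\phi_-$ is pinned down on all of $B_0$ and that the $\Delta_{n,0}$ contribution drops out wholesale; both rest on $(e-\phi)|_{B_0}=0$, which is exactly the content of the group-like hypothesis and the precise analogue of the harmless ``$-0$'' in \eqref{phi-x}. The lower-degree ``leakage'' permitted in the merely filtered (non-graded) setting causes no trouble, since by the splitting Lemma those terms still have degree strictly below $n$.
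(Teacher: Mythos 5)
Your proof is correct, and the existence/well-foundedness part (anchoring the recursion on $B_0$ via $e|_{B_0}=\phi|_{B_0}$, killing the $\Delta_{n,0}$ contribution because $e-\phi$ vanishes on the right-hand $B_0$-factor, and handling the middle terms inductively) is essentially the paper's argument; your explicit derivation of $\phi_+(B_+)\subseteq A_+$ from $\phi_-=R(\phi_--\phi_+)$ on $B_+$ and idempotency of $R$ is the standard step the paper delegates to Manchon. Where you genuinely diverge is the character property. The paper proves $\phi_-(xy)=\phi_-(x)\phi_-(y)$ directly, by the same Rota--Baxter induction on $\deg(x)+\deg(y)$ as in the connected case, supplemented by one new lemma ($\phi_-(xy)=\phi_-(yx)=\phi_-(y)$ for $x$ group-like). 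You instead observe that a character $\phi$ with $\phi(x)=1$ on group-likes kills the ideal $I=\langle 1-x\rangle$, factor it as $\bar\phi\circ\pi$ through the connected Hopf quotient $H=B/I$, apply Proposition~\ref{RENH} there, and pull back; since $\pi$ is a bialgebra map, the pullbacks satisfy the $B$-recursion and coincide with $\phi_-,\phi_+$ by your uniqueness argument. This is a clean reduction that avoids redoing the Rota--Baxter computation and the auxiliary lemma entirely, and it dovetails with the paper's own Lemma identifying multiplicative maps on $B$ trivial on group-likes with $A$-valued characters of $H$. Two small caveats: (i) Proposition~\ref{RENH} is stated for connected \emph{graded} Hopf algebras, whereas $B/I$ is in general only connected \emph{filtered} when $B$ is merely filtered, so you are implicitly invoking the (true, but unstated) filtered version of \ref{RENH}; (ii) in identifying the $\Delta_{0,n}$ contribution you should, as the paper does, run the counitality argument $(\epsilon\tensor\id)\Delta_{0,n}(x)=x$ to see that this piece is exactly $-R\phi(x)$ rather than merely ``determined'', if you want to recover formula \eqref{phi-x} verbatim. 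Neither caveat affects correctness.
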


\begin{proof}
  The proof goes mostly as in the connected case, but using 
  \eqref{eq:split} instead of~\eqref{eq:primitive}.
  Again, since $e$ and $\phi$ agree on $B_0$, the basis of the recursion
  is clear, and we have $\phi_-(x)=e(x)=1$ for all group-like elements 
  $x$.  For the same reason, for $x$ homogeneous of degree $n>0$, 
  the $\Delta_{n,0}$ part of the
  comultiplication is zero.  For the middle part, note again that since $B_+ 
  \subset \Ker \epsilon$, we are left with $-R\big( \sum_{(x)} 
    \phi_-(x')\phi(x'')\big)$ just as in the connected case.  For
  the $\Delta_{0,n}$ part, observe again that $e$ and $\phi_-$ agree on the
  left-hand tensor factor.  Now $\phi_-\tensor \phi$ can be taken in two
  steps:
  $$
  B\tensor B \stackrel{\epsilon\tensor \id}\longrightarrow \ground \tensor B
  \stackrel{\eta_A\tensor \phi}\longrightarrow A \tensor A .
  $$
  By counitality of $\Delta$, the first step yields $1\tensor x$, and
  therefore the second step yields $1\tensor \phi(x)$, which finally
  multiplies to $\phi(x)$, hence altogether this part gives
  $-R\phi(x)$, just as in the connected case.  Therefore formula
  \eqref{phi-x} holds true again.
  The proof of the character property, $\phi_-(xy) = 
  \phi_-(x)\phi_-(y)$, follows the proof in the 
  connected case~\cite{Manchon:0408}, by induction on $\deg(x)+\deg(y)$.
  The only new ingredient needed is the following 
  lemma (trivial in the connected case), which is easily proved by 
  induction.
\end{proof}

\begin{lemma}
  With notation as above, if $x$ is group-like then for all $y$,
  $$
  \phi_-(xy) = \phi_-(yx) = \phi_-(y) .
  $$
\end{lemma}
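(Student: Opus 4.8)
The plan is to deduce the statement from the observation that left multiplication by a group-like element is a filtered coalgebra endomorphism, together with the uniqueness of the solution of the recursion defining $\phi_-$. Write $L_x\colon B\to B$, $y\mapsto xy$. Since $x$ is group-like, $\Delta(xy)=\Delta(x)\Delta(y)=(x\tensor x)\Delta(y)=(L_x\tensor L_x)\Delta(y)$ and $\epsilon(xy)=\epsilon(x)\epsilon(y)=\epsilon(y)$, so $L_x$ is a counital coalgebra map; moreover $L_x(B_n)\subseteq B_n$ because $x\in B_0$, so $L_x$ respects the filtration and hence the degree bookkeeping. Precomposition with such a map is an endomorphism of the convolution algebra $\Lin(B,A)$: for any $f,g$ one has $(f*g)\circ L_x=(f\circ L_x)*(g\circ L_x)$, directly from $\Delta\circ L_x=(L_x\tensor L_x)\circ\Delta$. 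Since $R$ acts by postcomposition it commutes with $-\circ L_x$, and $e\circ L_x=\eta_A\circ\epsilon\circ L_x=e$ because $\epsilon\circ L_x=\epsilon$.

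Applying $-\circ L_x$ to the defining equation $\phi_-=e+R\big(\phi_-*(e-\phi)\big)$ then yields
\[
\phi_-\circ L_x \;=\; e+R\big((\phi_-\circ L_x)*((e-\phi)\circ L_x)\big).
\]
The key input is $\phi\circ L_x=\phi$, that is $\phi(xy)=\phi(y)$, which holds in the situation where the lemma is applied because there $\phi$ is a character, so $\phi(xy)=\phi(x)\phi(y)=\phi(y)$ using $\phi(x)=1$. Hence $(e-\phi)\circ L_x=e-\phi$, and the displayed equation becomes literally the recursion defining $\phi_-$, now solved by $\phi_-\circ L_x$. I would then invoke uniqueness of the solution — established, as for the original definition of $\phi_-$, by induction on the auxiliary degree, the point being that the $\Delta_{n,0}$ contribution vanishes because $e$ and $\phi$ agree on $B_0$, so each value is determined by strictly lower-degree values — to conclude $\phi_-\circ L_x=\phi_-$, i.e.\ $\phi_-(xy)=\phi_-(y)$. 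The statement for $yx$ follows symmetrically, using that right multiplication $R_x\colon y\mapsto yx$ is likewise a filtered coalgebra endomorphism with $\phi\circ R_x=\phi$.

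The only real obstacle is the verification that $\phi\circ L_x=\phi$; once this is in place everything else is formal. I would therefore flag that the lemma genuinely uses the character property (equivalently, $L_x$-invariance of $\phi$), which is exactly what is available in the intended application inside the proof of the character property for $\phi_-$. A purely degree-by-degree induction is also possible, splitting $\Delta(xy)=(x\tensor x)\Delta(y)$ via \eqref{eq:split} and checking the three parts separately, but it runs into the same need to rewrite $\phi(xy'')$ as $\phi(y'')$ in the middle terms, so it offers no shortcut around this point.
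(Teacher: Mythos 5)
Your argument is correct. The paper does not actually write out a proof of this lemma --- it only asserts that it is ``easily proved by induction'', presumably meaning the direct induction on $\deg(y)$ that you sketch at the end: split $\Delta(xy)=(x\tensor x)\Delta(y)$ according to \eqref{eq:split}, apply formula \eqref{phi-x} to $xy$, rewrite $\phi(xy)$ and $\phi(xy'')$ as $\phi(y)$ and $\phi(y'')$, and invoke the inductive hypothesis on the lower-degree factors $xy'$. Your primary route packages that induction once and for all: precomposition with $L_x$ is an endomorphism of the convolution algebra $\Lin(B,A)$ (because $L_x$ is a counital coalgebra map, $x$ being group-like), it fixes $e$, commutes with $R$ (which acts by postcomposition), and fixes $\phi$; hence $\phi_-\circ L_x$ satisfies the defining recursion \eqref{phiH}, whose solution is unique by exactly the degree argument that makes the recursion well-founded (the $(e-\phi)$ factor kills the $\Delta_{n,0}$ contribution). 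This is cleaner than the term-by-term induction and, importantly, isolates the one non-formal input, $\phi\circ L_x=\phi$: you are right to flag that the lemma fails for a general Feynman rule with $\phi=1$ on group-like elements, and that the character hypothesis --- which is in force at the point where the lemma is invoked, namely the proof of multiplicativity of $\phi_-$ --- is what makes it true. The only cosmetic quibble is the clash between your notation $R_x$ for right multiplication and the Rota--Baxter operator $R$.
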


\begin{blanko}{Example.}
  For $\deg(x)=1$, we have
  $$
  \phi_-(x) = -R(\phi(x)) \qquad \text{ and } \qquad
  \phi_+(x) = \phi(x) - R(\phi(x))
  $$
  just like in the connected case, even though $x$ cannot be assumed 
  to be primitive.
\end{blanko}

%%%%%%%%%%%%%%%%%%%%%%%%%%%%%%%%%%%%%%%%%%%%%%%%%%
\section{Bialgebra renormalisation II}
%%%%%%%%%%%%%%%%%%%%%%%%%%%%%%%%%%%%%%%%%%%%%%%%%%
\label{sec:bialgII}

The assumption that the (regularised) Feynman rules assign value $1$ to every
group-like element is perhaps not realistic from the viewpoint of physics
(cf.~Section~\ref{sec:ex} below for discussion).  We proceed to show that weaker
hypotheses are possible on the Feynman rules, provided some further conditions
are imposed on the bialgebra, to allow reduction to the previous case.

\begin{blanko}{Hypothesis II.}\label{condII}
  We assume that our filtered bialgebra $B$ is defined from combinatorial data in
  the following precise sense.  $B$ is the free vector space on a set $C$ of
  homogeneous `combinatorial elements', which is closed under multiplication, and also
  closed under comultiplication in the sense that for $x\in C$, all the terms in
  \eqref{eq:split} belong to $C\times C$.  
  In this situation, the key requirement we make is
  that for
  $x\in C$ of degree $n$, we have that both $\Delta_{0,n}(x)$ and $\Delta_{n,0}(x)$ are
  `indecomposable group-like', meaning that
  $$
  \Delta_{0,n}(x) = \ind(x) \tensor x \quad \text{ and } \quad 
  \Delta_{n,0}(x) = x\tensor \out(x)
  $$
  where both $\ind(x)$ and $\out(x)$ are group-like elements.
  It follows that the elements $x\in C_0=C\cap B_0$ are precisely the group-like
  elements. Hence Hypothesis II implies Hypothesis I.
\end{blanko}

The terminology `$\ind$' and `$\out$' is motivated mainly by
Example~\ref{ex:trees}: for $x$ a forest in the bialgebra of operadic trees,
$\ind(x)$ is the set of leaves and $\out(x)$ is the set of roots.
% For $x$ an arrow in the incidence coalgebra of a M\"obius category, $\ind(x)$
% is the source of $x$ and $\out(x)$ is the target.  For $x$ a graph in the
% bialgebra of graphs, $\ind(x)$ is the set of vertices and $\out(x)$ is the
% residue, as we now formalise.)
The terms $\ind(x) \tensor x + x\tensor \out(x)$ constitute the {\em
skew-primitive} part of the comultiplication,\footnote{The notion of
skew-primitive is well established in the Hopf algebra literature; see for example
\cite{Chin:brief}.} playing the role of the primitive part in the connected
case, $1\tensor x+x\tensor 1$.

\begin{lemma}\label{lem:deg(xy)}
  For $x,y \in C$, if $xy\in C_0$ then $x\in C_0$ and $y\in C_0$.
\end{lemma}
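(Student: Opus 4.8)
The plan is to avoid saying anything about how the auxiliary degree behaves under multiplication and to argue instead through the counit $\epsilon$, which \emph{is} an algebra homomorphism and hence interacts cleanly with the product. The two ingredients I would isolate first are both already available: by Hypothesis~\ref{condII} the elements of $C_0 = C\cap B_0$ are \emph{precisely} the group-like elements, so that $xy\in C_0$ forces $xy$ to be group-like and therefore $\epsilon(xy)=1$; and, by the construction of the auxiliary degree, each complement $B(i+1)$ was chosen inside $\Ker\epsilon$, so $B_+ = \oplus_{n\geq 1} B(n)\subseteq\Ker\epsilon$, i.e.\ $\epsilon$ kills every homogeneous element of positive degree.

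Granting these, the main step is immediate. Since $\epsilon$ is multiplicative, $\epsilon(x)\,\epsilon(y)=\epsilon(xy)=1$, so neither factor can have vanishing counit. Because $x,y\in C$ are homogeneous, the second ingredient then forces $\deg(x)=\deg(y)=0$; that is, $x,y\in B_0$, whence $x,y\in C\cap B_0 = C_0$, as required. The argument is symmetric in $x$ and $y$ and needs no induction.

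The only genuine subtlety — and what I take to be the actual point of the lemma — is resisting the temptation to prove it by a degree count on $xy$. Such a count would want something like $\deg(xy)\geq\deg(x)$, but this is exactly the kind of monotonicity the merely filtered (as opposed to graded) hypothesis does not supply a priori, since products may land in strictly lower filtration degree than expected. The counit route sidesteps this entirely: it uses only that $\epsilon$ is an algebra map, that group-like elements have counit $1$, and the normalisation $B_+\subseteq\Ker\epsilon$ that was deliberately built into the choice of the $B(i)$.
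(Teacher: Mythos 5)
Your argument is correct and is essentially the paper's own proof: both rest on the multiplicativity of $\epsilon$, the fact that a group-like element has counit $1$, and the normalisation $B_+\subseteq\Ker\epsilon$, the only difference being that the paper phrases it as a contradiction ($\deg(x)>0$ would force $\epsilon(xy)=0$) while you argue directly. Your closing remark about why a naive degree count on $xy$ would be unsafe in the merely filtered setting is a fair observation, but it is commentary rather than a divergence in method.
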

\begin{proof}
  If we had $\deg(x)>0$ then $\epsilon(x)=0$ and hence $\epsilon(xy)=0$,
  in contradiction with the fact that $xy$ is group-like.
\end{proof}
\begin{lemma}
  The assignments $\ind: C \to C_0$ and $\out: C \to C_0$ are idempotent
  monoid homomorphisms.
\end{lemma}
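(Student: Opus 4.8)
The plan is to read off $\ind(x)$ and $\out(x)$ from the comultiplication and to exploit multiplicativity of $\Delta$ together with Lemma~\ref{lem:deg(xy)}. First I would record the characterisation that makes everything run: by \eqref{eq:split} and the closure and group-likeness assumptions of~\ref{condII}, for $x\in C$ of degree $n>0$ the only term of $\Delta(x)$ whose left tensor factor is group-like is $\ind(x)\tensor x$, since the middle terms $x'\tensor x''$ have $x'$ of positive degree and the term $x\tensor\out(x)$ has left factor $x$ of positive degree. (For $x\in C_0$ one has $\Delta(x)=x\tensor x$ and $\ind(x)=\out(x)=x$, so the statement holds trivially there.) Symmetrically, $x\tensor\out(x)$ is the unique term with a group-like \emph{right} factor. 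Moreover, applying \eqref{eq:split} to any $z\in C$ of degree $k$ shows that the entire group-like-left-factor part of $\Delta(z)$ equals $\ind(z)\tensor z$, and likewise the group-like-right-factor part equals $z\tensor\out(z)$.

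For multiplicativity of $\ind$, I would take $x,y\in C$ and expand $\Delta(xy)=\Delta(x)\Delta(y)$ as a sum of terms $ac\tensor bd$, where $a\tensor b$ ranges over the terms of $\Delta(x)$ and $c\tensor d$ over those of $\Delta(y)$; by~\ref{condII} all of $a,b,c,d$ lie in $C$, hence so do $ac$ and $bd$. The key step is to isolate the terms whose left factor $ac$ is group-like: by Lemma~\ref{lem:deg(xy)}, $ac\in C_0$ forces $a\in C_0$ and $c\in C_0$, and by the previous paragraph the only available group-like left factors are $a=\ind(x)$ and $c=\ind(y)$. So the unique contributing term is $(\ind(x)\tensor x)(\ind(y)\tensor y)=\ind(x)\ind(y)\tensor xy$, whose left factor $\ind(x)\ind(y)$ is again group-like (a product of group-likes is group-like). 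Comparing this with the characterisation of the group-like-left-factor part of $\Delta(xy)$ as $\ind(xy)\tensor xy$ and cancelling the common nonzero factor $xy$ (e.g.\ by applying $\id\tensor\lambda$ for a functional $\lambda$ with $\lambda(xy)=1$), I obtain $\ind(xy)=\ind(x)\ind(y)$. The proof for $\out$ is the mirror image, isolating group-like right factors and using $(x\tensor\out(x))(y\tensor\out(y))=xy\tensor\out(x)\out(y)$.

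It then remains to check unitality and idempotency, which are short. Since the unit $1$ is group-like it lies in $C_0$, and for any group-like $g$ the relation $\Delta(g)=g\tensor g$ gives $\ind(g)=\out(g)=g$; thus both maps restrict to the identity on $C_0$. In particular $\ind(1)=\out(1)=1$, so with the multiplicativity just established they are monoid homomorphisms; and since $\ind(x),\out(x)\in C_0$, we get $\ind\circ\ind=\ind$ and $\out\circ\out=\out$, i.e.\ idempotency. The one delicate point in the whole argument is the bookkeeping of the middle step: one must be certain that, after expanding $\Delta(x)\Delta(y)$ into the basis $C\times C$, no additional term acquires a group-like left (resp.\ right) factor and that the contributing terms do not conspire to cancel. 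This is precisely what Lemma~\ref{lem:deg(xy)} rules out, once one knows from~\ref{condII} and \eqref{eq:split} that every term of $\Delta(x)$ and $\Delta(y)$ has all its factors in $C$ and that $\ind(x)\tensor x$ (resp.\ $y\tensor\out(y)$) is the only term carrying a group-like factor on the relevant side.
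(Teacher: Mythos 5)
Your proof is correct and follows essentially the same route as the paper's: both isolate the unique term of $\Delta(xy)=\Delta(x)\Delta(y)$ lying in the skew-primitive part (you by group-likeness of one tensor factor, the paper by the $n+0$ degree splitting --- the same thing under Hypothesis II) and invoke Lemma~\ref{lem:deg(xy)} to rule out other contributions. You additionally spell out idempotency, unitality, and the non-cancellation bookkeeping, which the paper leaves implicit.
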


\begin{proof}
  We do the case of $\out$.
  We have $\Delta_{n,0}(xy) = xy \tensor \out(xy)$.  On the other hand,
  the $n+0$ part of $\Delta(x)\Delta(y) = \big( \cdots + x \tensor \out(x)\big)
  \big( \cdots + y \tensor \out(y)\big)$ contains the term $xy \tensor 
  \out(y)\out(y)$.
  It cannot contain other $n+0$ terms by Lemma~\ref{lem:deg(xy)}
\end{proof}

\begin{blanko}{Residue.}
  The monoid homomorphism $\out: C \to C_0$ extends to an
  algebra homomorphism
  $$
    \res: B  \to  B_0  .
  $$
\end{blanko}

\begin{blanko}{Calibration.}\label{RBsc}
  For $\phi \in \Lin(B, A)$, 
  define $\tilde\phi : C \to A$ by
  $$
  \tilde\phi := \frac{\phi}{\phi\circ\res}.
  $$
  For this to make sense, we need to assume that for all $x\in C$, 
  \begin{equation}\label{phires}
    \phi(\res(x)) \ \text{ divides } \ \phi(x).
  \end{equation}
  Extend linearly to $\tilde \phi : B
  \to A$.  Since $\res$ is a projection onto $B_0$, clearly $\tilde\phi$ sends
  group-like elements to $1$.  Clearly it is again multiplicative if $\phi$ is
  so.
\end{blanko}

Now the following is an immediate consequence of Proposition~\ref{PropI}.
\begin{prop}
  If $B$ satisfies Hypothesis II, and if $\phi: B \to A$ satisfies 
  \eqref{phires}, then with notation as above,
  the recursive definition
\begin{equation}\label{eqII}
  \phi_- \  := \ e + R(\phi_- * (e \!-\!\tilde\phi)).
\end{equation}
is meaningful, and  the renormalised Feynman rule
$$\phi_+ := \phi_- *\phi$$
maps $B_+$ to $A_+$.  If $\phi$ is character, then so are $\phi_-$ and 
$\phi_+$.
\end{prop}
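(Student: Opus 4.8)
The plan is to reduce everything to Proposition~\ref{PropI} by feeding it the calibrated rule $\tilde\phi$ in place of $\phi$. The crucial point is already recorded in the Calibration paragraph~\ref{RBsc}: because $\res$ is a projection onto $B_0$, the calibrated map satisfies $\tilde\phi(x)=1$ on every group-like element, which is exactly the standing hypothesis of Proposition~\ref{PropI}. The first thing I would do is observe that the recursion \eqref{eqII} defining $\phi_-$ is \emph{verbatim} the recursion \eqref{phiH} with $\phi$ replaced by $\tilde\phi$; hence $\phi_-=(\tilde\phi)_-$, and Proposition~\ref{PropI} guarantees that it is meaningful. Concretely, the degree-zero part supplies the base case $\phi_-=e$ on group-like elements, while for $\deg(x)=n>0$ the splitting \eqref{eq:split}, together with $B_+\subset\Ker\epsilon$, annihilates the $\Delta_{n,0}$ contribution and collapses the $\Delta_{0,n}$ contribution to $-R\tilde\phi(x)$, reproducing the connected-case formula \eqref{phi-x} with $\tilde\phi$ in the role of $\phi$.

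With $\tilde\phi$ in hand the $A_+$-statement is then immediate: Proposition~\ref{PropI} applied to $\tilde\phi$ says that the calibrated renormalised rule $\phi_-*\tilde\phi$ maps $B_+$ into $A_+$. This is the renormalised Feynman rule we are after; it differs from $\phi_-*\phi$ only by the group-like residue factor $\phi\circ\res$, which enters through the skew-primitive term $x\tensor\out(x)$ of \eqref{eq:split} and which calibration was precisely designed to strip off. I would therefore check, by the same induction on $\deg(x)$ used for the recursion, that on a homogeneous $x\in C$ of degree $n>0$ the value $(\phi_-*\phi)(x)$ is obtained from the $A_+$-value $(\phi_-*\tilde\phi)(x)$ by the residue factor $\phi(\res(x))$, so that the landing in $A_+$ is governed entirely by the calibrated rule.

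For the character property I would argue that if $\phi$ is a character then so is $\tilde\phi$: the Residue construction shows $\res$ is an algebra homomorphism, so $\phi\circ\res$ is multiplicative, and hence the pointwise quotient $\tilde\phi=\phi/(\phi\circ\res)$ is multiplicative on $C$ — this is the remark already made in~\ref{RBsc}. Proposition~\ref{PropI} then delivers at once that $\phi_-=(\tilde\phi)_-$ and $\phi_-*\tilde\phi$ are characters. The only step demanding genuine care — and the one I expect to be the main obstacle — is the reconciliation in the previous paragraph between $\phi_-*\phi$ and the calibrated $\phi_-*\tilde\phi$; once the residue factorisation is in place, no new analytic input is needed, since the well-definedness, the $A_+$-property, and the character property are all inherited directly from Proposition~\ref{PropI}.
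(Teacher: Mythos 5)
Your reduction is exactly the paper's: the paper gives no proof beyond the single sentence that the proposition ``is an immediate consequence of Proposition~\ref{PropI}'', the point being precisely the one you make first, namely that $\tilde\phi$ sends group-like elements to $1$, so that Proposition~\ref{PropI} applies to $\tilde\phi$ and yields $\phi_-=(\tilde\phi)_-$ together with its properties. For the character statement you do not even need your reconciliation step: $\phi_-$ is a character by Proposition~\ref{PropI} applied to $\tilde\phi$, and $\phi_+=\phi_-*\phi$ is then a convolution of two characters, hence a character.

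The one place where you go beyond the paper --- reconciling $\phi_-*\phi$ with $\phi_-*\tilde\phi$ --- is also where your argument has a gap. The factorisation $(\phi_-*\phi)(x)=\phi(\res(x))\cdot(\phi_-*\tilde\phi)(x)$ is correct, but it requires $\out(x'')=\out(x)$ for \emph{every} term $x'\tensor x''$ of $\Delta(x)$, not only the skew-primitive one; under Hypothesis II this follows from coassociativity applied to the $B\tensor B\tensor B(0)$ component (the paper observes it concretely for graphs and trees), and it also uses commutativity of $A$. More seriously, the factorisation does not by itself give $\phi_+(x)\in A_+$: since $A_+=\Ker(R)$ is a unital subalgebra of $A$ and not an ideal, multiplying an element of $A_+$ by $\phi(\res(x))$ can leave $A_+$ (in minimal subtraction, multiplying a Taylor series by something with a pole part does exactly that). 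Your conclusion needs the additional input that $\phi(\res(x))$ itself lies in $A_+$ --- finiteness of the tree-level amplitudes --- which is physically natural and implicit in the discussion surrounding \eqref{phires}, but is not a formal consequence of the stated hypotheses. The paper shares this gap by declaring the result immediate; as a standalone argument, however, your final inference that ``the landing in $A_+$ is governed entirely by the calibrated rule'' does not go through without that extra assumption.
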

It should be noted that the Bogoliubov counter term $\phi_-$ depends only on the calibrated
Feynman rule $\tilde \phi$, not on $\phi$ itself, but that the final renormalised
Feynman rule $\phi_+$ does take the full information in $\phi$ into account.

% Note that in fact, for the proof it is not strictly necessary to require anything
% about $\Delta_{0,n}(x)$, but in all natural cases the condition is satisfied on
% both sides.

%%%%%%%%%%%%%%%%%%%%%%%%%%%%%%%%%%%%%%%%%%%%%%%%%%
\section{Examples}
%%%%%%%%%%%%%%%%%%%%%%%%%%%%%%%%%%%%%%%%%%%%%%%%%%

\label{sec:ex}

\begin{blanko}{Perturbative quantum field theory.}
  An inner line in a Feynman graph is called a {\em bridge} if removing it would 
  increase the number of connected components.  A graph is called 
  {\em 1PI} ($1$-particle irreducible) if it contains no bridges.
  We call a connected graph a {\em star} when it contains no inner lines 
  (hence is 1PI).
  
  In perturbative quantum field theory \cite{Connes-Kreimer:9912092},
  $H$ is a Hopf algebra of certain 1PI
  Feynman graphs, but {\em excluding stars}.  The
  comultiplication is given by
  $$
  \Delta(\Gamma) \ = \ 1 \tensor \Gamma 
  \ + \ \sum_{\gamma\subsetneqq \Gamma} \gamma \tensor \Gamma/\gamma
  \ + \ \Gamma \tensor 1 ,
  $$
  
  \vspace{-6pt}
  
  \noindent
  like for example
  \begin{center}
  \begin{texdraw}
   \htext (0 0) {$\Delta($}
   \rmove (4 0) \femkant 
     \rmove (27 0)
    \htext  {$) \ = $}
      \rmove (15 0)
      \htext{$1$}

      \rmove (10 0) \htext {$\tensor$}
      
      \rmove (5 0) \femkant
      
      \rmove (30 0) \htext {$+$}
      
      \rmove (10 0) \trekant 
      \rmove (18 0) \htext {$\tensor$}
      \rmove (5 0) \trekant

      \rmove (28 0) \htext {$+$}
      
      \rmove (12 0) \femkant       \rmove (24 0) \htext {$\tensor$}
      \rmove (10 0) \htext{$1$.}
 \end{texdraw}
  \end{center}
  Note that the comultiplication is defined `by hand' to have the form
  \eqref{eq:primitive}.  In a uniform description, the natural last term would
  be $\Gamma\tensor\Gamma/\Gamma = \Gamma\tensor\res \Gamma$, but the star $\res
  \Gamma$ was excluded!  One is led, as Manchon~\cite{Manchon:0408}, {\em not}
  to exclude the stars.  But then the small graphs $\gamma$ should be allowed to
  be stars too,\footnote{Manchon does actually not allow the $\gamma$ to be
  stars, but keeps the quotienting in the left-hand tensor factor.  Note also
  that the $\epsilon$ he indicates is not in fact the counit.} and in the end
  the comultiplication looks like this:
    \begin{center}
  \begin{texdraw}
   \htext (0 0) {$\Delta($}
   \rmove (4 0) \femkant 
     \rmove (27 0)
    \htext  {$) \ = $}
      \rmove (7 0)
      \rmove (7 0)\trevert 
      \rmove (7 0)\trevert 
      \rmove (7 0)\trevert 
      \rmove (7 0)\trevert 
      \rmove (7 0)\trevert

      \rmove (18 0) \htext {$\tensor$}
      
      \rmove (5 0) \femkant
      
      \rmove (29 0) \htext {$+$}
      
      \rmove (11 0) \trekant \rmove (9 0) \trevert \rmove (7 0) \trevert 
      \rmove (18 0) \htext {$\tensor$}
      \rmove (5 0) \trekant

      \rmove (28 0) \htext {$+$}
      
      \rmove (12 0) \femkant       \rmove (24 0) \htext {$\tensor$}
      \rmove (6 0) \trevert 
 \end{texdraw}
  \end{center}
  It is now a bialgebra $B$ rather than a Hopf algebra, since the stars (and
  disjoint unions of stars) are group-like.  One possible grading is by loop 
  number, with which it is clear that $B$
  satisfies Hypothesis II: for $x$ a graph, $\ind(x)$ is the set of vertices,
  and $\out(x)$ is the residue of $x$.
  The connected quotient is the usual Hopf algebra $H$.
  
  Mathematically this bialgebra has some pleasant features: for one thing, all
  the left-hand tensor factors have the same set of vertices as the original
  graph, and all the right-hand tensor factors have the same residue as the
  original graph.  Furthermore, the residue of each left-hand tensor factor
  matches precisely the set of vertices of the right-hand tensor factor.
  
  Physically, an attractive feature of $B$ is that it contains all the terms of
  the (bare) Lagrangian; these are the stars, including stars like
  \begin{texdraw}\setunitscale{1.2} \lvec (10 0)\move (4 1) \lvec (6, -1) \move
  (4 -1) \lvec (6, 1) \move (0 -2) \end{texdraw} for mass terms. 
  
  The Feynman rules (as defined in text books, independently of Hopf algebra
  viewpoints) naturally assign non-trivial amplitudes also to stars; these
  cannot be seen at the level of $H$.  What the Feynman rules exactly are in the
  Hopf algebra interpretation is more subtle than outlined in \ref{ex:QFT} (see
  Kreimer~\cite{Kreimer:0202110}, \S 1.3): in reality the Feynman rules depend
  on external momenta (and possibly other physical parameters), but it is a
  basic feature that this dependence is the same for a graph and for its
  residue, up to a scalar function (the so-called form factor).  Hence the
  divisibility assumption \eqref{phires} is validated, and these parameters
  cancel out in $\tilde\phi$, which then clearly sends group-like elements to
  $1$.  Kreimer writes in fact (\cite{Kreimer:0609004}, \S 3.3): ``Our Feynman
  rules [...]~are normalized to evaluate the tree-level term to unity.''  The
  calibration step \ref{RBsc} may be seen as a transparent formalisation of this
  assumption, building it into the abstract framework which makes sense also
  beyond the case of graphs.

% On the other hand, there is a coupling constant $g$ involved in the Feynman
% rules, but if we just extend $A$ to $A[g,g^{-1}]$ with $R(ga) = gR(a)$ for all
% $a\in A$ (see Ebrahimi-Fard and Patras~\cite{EbrahimiFard-Patras:1003.1679}),
% then again the divisibility condition is met.
  
  The bialgebra of Feynman graphs has the prospective of formulating more
  aspects of renormalisation inside it than are possible inside $H$.
%   , like for example multiplicative renormalisation,
%   commonly expressed as a modification of the Lagrangian.
  For instance, one may wish to write down a Dyson--Schwinger equation such as
  \begin{center}\begin{texdraw}
    \move (0 0) 
    \bsegment
    \lvec (12 0) \rlvec (8 8) \move (12 0) \rlvec (8 -8)
    \move (12 0) \fcir f:0.8  r:3  \lcir r:3
    \esegment
    
    \htext (32 0) {$=$}
    
       \move (40 0) \bsegment
       \move (2 0)
       \lvec (12 0) \rlvec (8 8) \move (12 0) \rlvec (8 -8) 
       \esegment
       
        \htext (70 0) {$+$}
   
	\move (82 0)
	\bsegment
       \move (0 0)
       \lvec (12 0) \rlvec (16 16) \move (12 0) \rlvec (16 -16) 
       \move (22 10) \lvec (22 -10)
       \move (12 0) \fcir f:0.8  r:3  \lcir r:3
       \move (22 10) \fcir f:0.8  r:3  \lcir r:3
       \move (22 -10) \fcir f:0.8  r:3  \lcir r:3
       \esegment
       
            \htext (139 0) {$+ \quad \cdots$}
  \end{texdraw}
  \end{center}
  which, as it stands, makes sense and has a solution (the Green function) in 
  $B$ (or rather in the completion of $B$), but not in $H$.
\end{blanko}

\begin{blanko}{Trees --- combinatorial versus operadic.}\label{ex:trees}
  In the usual Connes--Kreimer Hopf algebra of rooted trees
  \cite{Connes-Kreimer:9808042}, 
  also called the Butcher--Connes--Kreimer Hopf algebra,
  the
  trees are combinatorial trees (such as \
  \raisebox{1pt}{\begin{texdraw}\smalldot\end{texdraw}} , \raisebox{-3pt}{
    \begin{texdraw}\twonodetree\end{texdraw}} , \raisebox{-3pt}{
    \begin{texdraw}\vtree\end{texdraw}}  ), and the admissible cuts used to define the
  comultiplication actually delete edges rather than cutting them:
  $$
  \Delta( \raisebox{-3pt}{
    \begin{texdraw}\twonodetree\end{texdraw}}
   \ ) 
\   =\  1 \tensor  \raisebox{-3pt}{
    \begin{texdraw}\twonodetree\end{texdraw}}
   \  \ + \ \
  \raisebox{1pt}{\begin{texdraw}\smalldot\end{texdraw}} \ \tensor \
  \raisebox{1pt}{\begin{texdraw}\smalldot\end{texdraw}} \ 
   \ + \ \raisebox{-3pt}{
    \begin{texdraw}\twonodetree\end{texdraw}}
   \  \tensor 1 .
  $$
  
  For an operadic interpretation of Hopf algebra renormalisation (as hinted at
  in many papers by Kreimer and his collaborators,
  e.g.~\cite{Bergbauer-Kreimer:0506190}), the natural trees to consider are {\em
  operadic trees} (i.e.~with open-ended edges (leaves)).  These form a bialgebra
  rather than a Hopf algebra, cf.~\cite{Kock:1109.5785}: the comultiplication is
  exemplified by
  
  \vspace{-4pt}
  
  \begin{center}\begin{texdraw}
    \htext (0 0) {$\Delta($}
   \rmove (12 0) \twotree
     \rmove (15 0)
    \htext  {$) \ \ = $}
        \rmove (23 0) \zerotree
    \rmove (3.5 0) \zerotree
    \rmove (3.5 0) \zerotree

          \rmove (11 0) \htext {$\tensor$}
   \rmove (14 0) \twotree

	            \rmove (18 0) \htext {$+$}

    \rmove (20 0) \onetree  \rmove (5 0) \zerotree 
    
    \rmove (12 0) \htext {$\tensor$}
   \rmove (12 0) \onetree
   
      \rmove (19 0) \htext {$+$}

    \rmove (21 0) \twotree
    
    \rmove (13 0) \htext {$\tensor$}
   \rmove (10 0) \zerotree
  \end{texdraw}\end{center}
  The nodeless trees and forests are the group-like elements.
  The bialgebra of operadic trees~\cite{Kock:1109.5785}
%   (and more generally $P$-trees for any 
%   polynomial endofunctor $P$, cf.~\cite{Kock:1109.5785}) 
  is easily seen
  to satisfy Hypothesis II: for $x$ a forest, $\ind(x)$ is the forest
  consisting of its leaves, and $\out(x)$ is the forest consisting of its roots.
  There is a bialgebra homomorphism from operadic trees 
  to combinatorial trees
  given by taking core (i.e.~shaving off leaves and root
%   , and forgetting
%   decorations~
  \cite{Kock:1109.5785}); this is a more drastic quotient than
  just collapsing group-like elements.
  
  In analogy with the case of graphs, note that in the comultiplication formula
  in $B$, every left-hand tensor factor has the same leaf profile as the
  original tree or forest, while each of the right-hand tensor factors has the same root
  profile as the original tree or forest; again in each term, the root profile of the
  left-hand tensor factor matches the leaf profile of the right-hand tensor
  factor.
  
  We mention in passing that these strict typing constraints in the
  comultiplication formula were found important in recent
  work~\cite{GalvezCarrillo-Kock-Tonks:1207.6404} establishing a Fa\`a di Bruno
  formula for the Green function in the bialgebra of operadic trees, in analogy
  with similar formulae found by van Suijlekom~\cite{vanSuijlekom:0807} in the
  case of graphs.  The Green function is the sum of all operadic trees, weighted
  by symmetry factors (and it is crucial for these symmetry factors to come out
  right to use operadic trees rather than combinatorial trees); it appears as
  solution to a certain abstract combinatorial 
  Dyson--Schwinger equation in the category of groupoids.
  Further relationships with Category Theory and 
  Logic are explored in \cite{Kock:1109.5785} and \cite{Kock:DSE-W}.
\end{blanko}

%%%%%%%%%%%%%%%%%%%%%%%%%%%%%%%%%%%%%%%%%%%%%%%%%%
\section{Coalgebra renormalisation and M\"obius inversion}
%%%%%%%%%%%%%%%%%%%%%%%%%%%%%%%%%%%%%%%%%%%%%%%%%%
\label{sec:M}

In the above account of bialgebra renormalisation, the multiplication in $B$ did
not play the most important role: all the results can be stated for coalgebras
instead of bialgebras --- except of course that the notions of multiplicativity
and characters do not make sense any more.  We have:

\begin{prop}
  Let $C$ be a coalgebra satisfying Hypothesis II, and assume $\phi: C \to A$
  satisfies \eqref{phires}.  Then the recursive definition
  \begin{equation}\label{phiC}
    \phi_- \  := \ e + R(\phi_- * (e\!-\!\tilde\phi))
  \end{equation}
  makes sense, and 
  $$
  \phi_+ := \phi_- * \phi
  $$ 
  maps $C_+$ to $A_+$.
\end{prop}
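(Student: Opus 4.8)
The plan is to show that the proofs of Proposition~\ref{PropI} and of the Proposition of Section~\ref{sec:bialgII} used the multiplication of $B$ only to establish the character property, so that everything else transcribes verbatim once the ingredients are reinterpreted coalgebraically. Concretely, I would first reconstruct the residue operator without reference to any product. The map $\out$ is read off from the comultiplication alone, via $\Delta_{n,0}(x) = x\tensor\out(x)$, so it is a well-defined map from the set of combinatorial elements to $C_0$; extending it \emph{linearly} (rather than as an algebra homomorphism) already gives $\res : C \to C_0$. For group-like $x$ one has $\Delta(x) = x \tensor x$, whence $\out(x) = x$, so $\res$ restricts to the identity on $C_0$ and is therefore a projection onto $C_0$. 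This is all that is needed for the calibration $\tilde\phi := \phi/(\phi\circ\res)$ of~\ref{RBsc} to make sense under~\eqref{phires}, and for $\tilde\phi$ to send every group-like element to $1$; the multiplicativity of $\out$, which did require the product, is simply not invoked.

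With $\tilde\phi$ in hand, I would next check that the recursion~\eqref{phiC} is well-defined by induction on the auxiliary degree, exactly as in the proof of Proposition~\ref{PropI}. The splitting~\eqref{eq:split} is a statement about the comultiplication only, and Hypothesis~II (see~\ref{condII}) refines it to $\Delta(x) = \ind(x)\tensor x + \sum_{(x)} x'\tensor x'' + x\tensor\out(x)$ with $\ind(x),\out(x)$ group-like. Feeding this into $(\phi_- * (e-\tilde\phi))(x)$ for $\deg(x)=n>0$: the $\Delta_{n,0}$ term contributes $\phi_-(x)\,(e-\tilde\phi)(\out(x)) = 0$ since $e$ and $\tilde\phi$ both send the group-like $\out(x)$ to $1$; the $\Delta_{0,n}$ term contributes $-\tilde\phi(x)$ since $\phi_-(\ind(x))=1$ and $e(x)=0$; and the middle term involves only elements of strictly smaller degree (using $C_+ \subset \Ker\epsilon$). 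This reproduces the analogue of~\eqref{phi-x}, namely $\phi_-(x) = -R\big(\tilde\phi(x) + \sum_{(x)} \phi_-(x')\tilde\phi(x'')\big)$, whose right-hand side is inductively determined, the base case being that $e$ and $\tilde\phi$ agree on $C_0$.

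Finally, for the renormalisation conclusion I would argue that $\phi_+ := \phi_- * \phi$ sends $C_+$ into $A_+=\Ker R$ exactly as in Section~\ref{sec:bialgII}: writing $\psi := \phi_- * (e-\tilde\phi)$ one has $\phi_- = e + R\psi$, so the calibrated counterpart $\phi_- * \tilde\phi = e - (\id-R)\psi$ lands in $A_+$ on positive degree because $\id-R$ is the projection onto $\Ker R$; the passage from this to the statement for the uncalibrated $\phi_+$ is the one place where the \emph{full} $\phi$, not merely $\tilde\phi$, re-enters, and it is handled precisely as in the bialgebra case. I expect the main---and really the only nontrivial---obstacle to be the first step: confirming that the residue operator, originally produced as an algebra homomorphism, can be recovered purely from the coalgebra structure. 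Once that is secured, the argument contains no genuinely new coalgebraic difficulty, since the character statements of Propositions~\ref{RENH} and~\ref{PropI}, the only places where multiplicativity of $B$ was essential, have been dropped from the present statement.
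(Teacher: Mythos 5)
Your proposal is correct and follows essentially the same route as the paper, which states this proposition without a separate proof precisely on the grounds you identify: that multiplication entered the earlier arguments only through the character property, so the recursion and the $A_+$-valuedness carry over verbatim. Your explicit reconstruction of $\res$ by linear (rather than multiplicative) extension of $\out$, and your observation that $R$ need only be idempotent rather than Rota--Baxter, match what the paper leaves implicit in its remark following the proposition.
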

Note that since there is no multiplicativity to establish, it is not
essential that $R$ be Rota--Baxter, it suffices to be idempotent.

\begin{blanko}{Incidence coalgebras of M\"obius categories.}
  \label{inc}
  Two classical settings for incidence (co)algebras and M\"obius inversion are
  locally finite posets (Rota et al.), and monoids with the finite decomposition
  property (Cartier--Foata).  An elegant common
  generalisation is Leroux's notion of M\"obius category (for which we refer to
  \cite{Lawvere-Menni} for a modern
  treatment).  Very briefly, a M\"obius category is a category $\CC$
  subject to some finiteness conditions to make the following constructions
  make sense. The incidence coalgebra of $\CC$ is the vector
  space $C$ spanned by the arrows of $\CC$ (the arrows are the combinatorial 
  elements in the sense of \ref{condII}), with coalgebra structure
  given by the formula
  $$
  \Delta(f) = \sum_{b\circ a = f} a \tensor b .
  $$
  The sum is over all pairs of composable arrows whose composite is the arrow 
  $f$.
  
  (The classical settings are special cases of M\"obius categories, by
  interpreting a poset as a category in which there is one arrow $x\to y$
  whenever $x \leq y$, and by interpreting a monoid as a category with only one
  object (the monoid elements being then the arrows).)

  A coalgebra filtration of $C$ is given by the maximum length of effective
  chains of arrows (i.e.~not involving identity arrows) that compose to a given
  arrow.  (The M\"obius condition is equivalent to the existence of this
  filtration.)  Clearly $C_0$ is spanned by the identity arrows, and these are
  group-like since the only factorisation is $\id = \id\circ \id$.  Finally, for
  any arrow $f:x\to y$ the trivial factorisations $f = f \circ \id_x$ and $f =
  \id_y \circ f$ constitute the only $0+n$ and $n+0$ splittings, so as to verify
  Hypothesis II.  In conclusion:
\end{blanko}

\begin{prop}
  The incidence coalgebra of a M\"obius category always
  satisfies Hypothesis II.
\end{prop}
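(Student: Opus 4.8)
The plan is to verify the defining conditions of Hypothesis~\ref{condII} directly for the incidence coalgebra $C$ of a M\"obius category $\CC$, since almost all the needed facts have already been assembled in the preceding discussion~\ref{inc}. The set $C$ of combinatorial elements is the set of arrows of $\CC$, graded by the maximal length of an effective (identity-free) chain factoring a given arrow; the first task is simply to record that this \emph{is} a coalgebra filtration, which is precisely the content of the M\"obius condition, and that $C_0$ consists of the identity arrows. These are group-like because the only factorisation of $\id_x$ is $\id_x\circ\id_x$ (an effective factorisation of an identity would produce a nontrivial effective chain, contradicting degree $0$), so $\Delta(\id_x)=\id_x\tensor\id_x$ and $\epsilon(\id_x)=1$.

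Next I would check the skew-primitivity requirement, namely that for an arrow $f:x\to y$ of degree $n>0$ the only $0+n$ and $n+0$ splittings in $\Delta(f)=\sum_{b\circ a=f} a\tensor b$ are the trivial ones. Here I would use that a factorisation $f=b\circ a$ has its left factor $a$ of degree $0$ exactly when $a$ is an identity, forcing $a=\id_x$ and $b=f$; dually the right factor $b$ has degree $0$ exactly when $b=\id_y$, forcing $a=f$. This gives
$$
\Delta_{0,n}(f)=\id_x\tensor f,\qquad \Delta_{n,0}(f)=f\tensor\id_y,
$$
so that $\ind(f)=\id_x$ and $\out(f)=\id_y$ are group-like, as Hypothesis~\ref{condII} demands. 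Closure of $C$ under comultiplication is immediate, since every term $a\tensor b$ appearing in $\Delta(f)$ is again a pair of arrows, hence lies in $C\times C$.

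The one point requiring genuine (if routine) care is the claim that the length filtration is actually a coalgebra filtration, i.e.\ that $\Delta(C_n)\subseteq\sum_{p+q\le n} C_p\tensor C_q$, and in particular that the middle terms of \eqref{eq:split} have strictly smaller degree. This reduces to the sub-additivity of effective length under composition together with the fact that a genuine (both factors non-identity) factorisation of $f$ splits off positive length on each side; the M\"obius finiteness assumptions guarantee that degrees are well-defined finite numbers so that this bookkeeping makes sense. I expect this degree-counting step to be the main obstacle, in that it is where the definition of the M\"obius condition has to be invoked rather than merely the formal shape of the coproduct. Once it is in place, Hypothesis~\ref{condII} holds for $C$, and the proposition follows; the multiplicative clauses of \ref{condII} are vacuous here since $C$ is treated purely as a coalgebra.
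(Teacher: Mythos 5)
Your proof is correct and follows essentially the same route as the paper: identify $C_0$ with the span of identity arrows (group-like since $\id=\id\circ\id$ is the only factorisation), observe that the trivial factorisations $f=f\circ\id_x$ and $f=\id_y\circ f$ are the only $0+n$ and $n+0$ splittings so that $\ind(f)=\id_x$ and $\out(f)=\id_y$, and appeal to the M\"obius condition for the existence of the length filtration. The paper compresses all of this into the paragraph preceding the proposition; your write-up just makes the same steps explicit.
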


\begin{blanko}{Remark.}\label{3=1+1}
  The following configuration of arrows in a M\"obius category 
  illustrates the need for filtering rather than bona fide grading:
  $$
  \xymatrix @! @R=10pt @C=5pt {
  && \cdot \ar[rr] && \cdot \ar[rrd] && \\
  \cdot \ar[rru] \ar[rrrrrr]^f \ar[rrrd]_a &&&&&& \cdot \\
  &&& \cdot \ar[rrru]_b &&&
  }
  $$
  Clearly $\deg(f) = 3$, but $\Delta(f)$ contains the term $a\tensor b$
  of degree splitting $1+1$.
\end{blanko}

\begin{blanko}{M\"obius inversion.}
  A very special case of renormalisation in the coalgebra setting is M\"obius
  inversion.  Let $C$ be the incidence coalgebra of a M\"obius category.  Let
  $A=\ground$ be the trivial Rota--Baxter algebra (the ground field with $R$ the
  identity map).  Take $\phi$ to be the zeta function, $\zeta(x) = 1$ for all
  combinatorial elements $x$.  Then $\phi_-$ is the M\"obius function $\mu$
  (i.e.~the inverse to $\zeta$ in the convolution algebra $\Lin(C,\ground)$).
  Indeed, the standard formula for M\"obius inversion (see \cite{Stanley:volI}
  for the poset case)
  $$
  \mu(\id) = 1, \qquad \mu(x) = - \sum_{\substack{ab=x\\ b\neq \id}} \mu(a)
  $$
  can be written
  $$
  \mu = \epsilon + \mu * ( \epsilon- \zeta)
  $$
  which is precisely \eqref{phiC}.  (From a renormalisation viewpoint, this is 
  a very degenerate case: $A_+ = \{0\}$, and the `renormalised 
  zeta function' is just $\phi_+=\epsilon$.)
  
  In the incidence coalgebra situation, dividing out by the co-ideal spanned by
  $1-x$ for $x$ group-like corresponds to considering certain reduced incidence
  coalgebras, but unlike in the bialgebra case this does not imply the existence
  of an antipode. 
  When the antipode exists, of course M\"obius inversion is 
  nothing more than applying the antipode, but the M\"obius inversion formula
  is more general. 
%   (Note that in the case of a connected bialgebra, the antipode is a special
%   case of renormalisation (as already observed by Kreimer): take $(A,R) =
%   (H,\id)$ and take $\phi = \id$.  Then $\phi_-$ is the antipode.  Indeed, the
%   standard recursive formula for the antipode $$S = e + S* (e-\id)$$ is then
%   precisely \eqref{phiH}.)
\end{blanko}

\begin{blanko}{M\"obius decomposition spaces.}
  A far-reaching generalisation of the notion of M\"obius category was
  introduced recently in \cite{Galvez-Kock-Tonks:1404.3202} under the name
  M\"obius decomposition space.  We shall not reproduce the definition here, but only
  mention three facts: (1) The above notions and results generalise readily to
  M\"obius decomposition spaces;  (2) a monoidal structure on a M\"obius
  decomposition space makes the resulting incidence coalgebra a bialgebra; and
  (3) in fact the bialgebras of graphs and trees are examples of incidence
  bialgebras of monoidal M\"obius decomposition spaces.
\end{blanko}

\bigskip

\noindent \textbf{Acknowledgements.} I am grateful to Kurusch Ebrahimi-Fard for
his patience, over the past five years, in explaining to me the rudiments of
renormalisation theory, together with many related fascinating topics, and more
specifically for his help with the present note.  I also thank Marc Bellon for
useful feedback.  This research has been sponsored by grant number
MTM2013-42293-P of Spain.

% \bibliographystyle{scplain}
% \bibliography{joachims}

\noindent
Joachim Kock \texttt{<kock@mat.uab.cat>}\\
Departament de matem\`atiques\\
Universitat Aut\`onoma de Barcelona\\
08193 Bellaterra (Barcelona)\\
SPAIN.

\end{document}